\newtheorem{theorem}{Theorem}
\newtheorem{corollary}{Corollary}
\renewcommand{\vec}[1]{\mathbf{#1}}
\def\blfootnote{\xdef\@thefnmark{}\@footnotetext}
\begin{document}

\title{\LARGE{Cache-Enabled Millimetre-Wave Fluid Antenna Systems:\\Modeling and Performance}} 
\author{Farshad~Rostami~Ghadi,~\IEEEmembership{Member},~\textit{IEEE}, 
            Kai-Kit~Wong,~\IEEEmembership{Fellow},~\textit{IEEE},\\
            Kin-Fai~Tong,~\IEEEmembership{Fellow},~\textit{IEEE},~and 
            Yangyang Zhang
\vspace{-1cm}
\thanks{The work of F. Rostami Ghadi, K. K.  Wong and K. F. Tong is supported by the Engineering and Physical Sciences Research Council (EPSRC) under Grant EP/W026813/1.}
\thanks{F. Rostami Ghadi, K. K. Wong and K. F. Tong are with the Department of Electronic and Electrical Engineering, University College London, London, WC1E 6BT, United Kingdom. K. K. Wong is also with Yonsei Frontier Lab, Yonsei University, Seoul, 03722, Korea (e-mail: $\{\rm f.rostamighadi,\rm kai\text{-}kit.wong,k.tong\}@ucl.ac.uk$).}
\thanks{Y. Zhang is with Kuang-Chi Science Limited, Hong Kong SAR, China.}
}
\maketitle

\begin{abstract}
This letter investigates the performance of content caching in a heterogeneous cellular network (HetNet) consisting of fluid antenna system (FAS)-equipped mobile users (MUs) and millimeter-wave (mm-wave) single-antenna small base stations (SBSs), distributed according to the independent homogeneous Poisson point processes (HPPP). In particular, it is assumed that the most popular contents are cached in the SBSs to serve the FAS-equipped MUs requests. To assess the system performance, we derive compact expressions for the successful content delivery probability (SCDP) and the content delivery delay (CDD) using the Gauss-Laguerre quadrature technique. Our numerical results show that the performance of cache-enabled mm-wave HetNets can be greatly improved, when the FAS is utilized at the MUs instead of traditional fixed-antenna system deployment.
\end{abstract}

\begin{IEEEkeywords}
Edge caching, fading correlation, fluid antenna system, HetNet, millimeter-wave, modeling.
\end{IEEEkeywords}

\blfootnote{Digital Object Identifier 10.1109/XXX.2021.XXXXXXX.}
\blfootnote{Corresponding author: F. Rostami Ghadi.}

\section{Introduction}\label{sec-intro}
\IEEEPARstart{G}{lobal} mobile data traffic has experienced an explosive growth over recent years, due to rapid development in mobile technologies and an increasing demand for using smart devices with higher data rates. According to the Ericsson report, by $2028$, the monthly global mobile data traffic is projected to soar to $325$ exabytes (EB), which is almost quadruple compared to the levels recorded in $2022$ \cite{CiteDrive2022}. In this regard, video traffic, which is mainly due to a tremendous demand for video streaming in multimedia applications, accounts for most of the data traffic. One key solution to tackle this issue in future mobile networks, i.e., sixth-generation (6G) technology, is to deploy dense small-cell base stations (SBSs) in the existing macrocell cellular networks, so that the most popular contents are cached at the SBSs in advance during off-peak hours \cite{bastug2014living}. By doing so, contents become much closer to mobile users (MUs), and once a content (e.g., video) is requested by a MU, the corresponding SBS can directly serve the MU and thus the end-to-end content delivery delay (CDD) reduces and the MU's quality of service (QoS) improves.

Independently, the main challenge at the MUs is related to their physical space limitations, which dictates the advantage of using multi-antenna systems in providing capacity gains by exploiting diversity over multiple signals affected by fading. More precisely, the rule of thumb in multiple-input multiple-output (MIMO) systems is to spatially separate the antennas by at least half of a wavelength to ensure diversity gain, making it impractical to employ massive MIMO in a tiny space of mobile devices. Fluid antenna system (FAS) has recently been introduced as a promising alternative to provide significant diversity gain in the small space of mobile devices via antenna position flexibility. More specifically, FAS refers to any software-manageable fluidic, conductive, or dielectric structure capable of changing its shape and size, and switching its position (i.e., ports) over a pre-defined space to reconfigure the radiation characteristics \cite{wong2020fluid}. This unique feature can provide enhanced performance, flexibility, and durability for MUs in various wireless communication systems.

In recent years, extensive contributions have been carried out for cache-enabled heterogeneous cellular networks (HetNets) from various aspects, such as cache placement strategies \cite{blaszczyszyn2015optimal,wang2017cooperative,zhu2018content} and content delivery techniques \cite{bacstug2015cache,rostami2019outage,wang2021content}. On the other hand, great efforts have been made regarding FAS and its application to enhance wireless system performance, focusing on channel modeling and estimation \cite{wong2021fluid,khammassi2023new,ghadi2023gaussian,skouroumounis2022fluid,xu2023channel,zhang2023successive}, and performance analysis \cite{tlebaldiyeva2022enhancing,new2023fluid,ghadi2023copula,vega2023novel,ghadi2023fluid,ghadi2024fluid}, and optimization \cite{10354059,ye2023fluid,new2023fluid1,ISAC_FAS}. Nevertheless, the dynamic and reconfigurable properties of FAS are beneficial to content-centric cellular networks but this is not well understood.

The remarkable feature of cache-enabled cellular networks in bringing content closer to MUs when combined with the unique feature of FAS in adaptively modifying its radiating structures based on network demands can potentially be synergistic. Motivated by this, this letter's aim is to evaluate the performance of cache-enabled cellular networks when MUs take advantage of FAS. In particular, we first consider a dense cellular network, where the most popular content is cached in the millimetre-wave (mm-wave) SBSs to serve the FAS-equipped MUs. Then, in order to analyze the considered system performance, we derive compact analytical expressions for the successful content deliver probability (SCDP) and CDD in terms of the joint multivariate normal distribution, exploiting the Gauss-Laguerre technique. Consequently, numerical results indicate that deploying FAS with only one activated port at the MUs can remarkably enhance the system performance rather than using the traditional fixed-antenna system at the MUs.

\section{System Model}\label{sec-sys}
\subsection{Caching Strategy and Content Placement}
We consider a wireless cache-enabled HetNet consisting of mm-wave SBSs tiers and MUs as depicted in Fig.~\ref{fig-model}, where the SBSs have limited storage that caches the most popular contents. In this network, it is assumed that each MU in the corresponding tier connects to its nearest SBS that has cached the desired content. For this purpose, we suppose a finite set of content, denoted as $\mathcal{F}\coloneqq \left\{f_1,\dots,f_l,\dots, f_L\right\}$, in which $f_l$ is the $l$-th most popular content and $L$ represents the number of contents. Moreover, it is assumed that each SBS is able to cache a maximum of $K$ contents so that $K\ll L$, where the size of each content is normalized to $1$. Furthermore, we define the request probability for the $l$-th content as $p_l$ which is modeled as the Zipf distribution, i.e.,
\begin{align}
p_l = \frac{l^{-\zeta}}{\sum_{k=1}^{L}k^{-\zeta}}, 
\end{align}
where $\zeta$ is the Zipf exponent and $\sum_{l=1}^{L}p_l=1$. Additionally, without loss of generality, we proceed under the assumption that the contents are arranged in descending order based on the values of $p_l$. We also consider a probabilistic caching model in which content is stored independently with equal probability across all SBSs within the same tier. Under such an assumption, we denote $q_l$ as the probability that the $l$-th content is cached at a SBS. Therefore, in this probabilistic caching strategy, the set of caching probability $\textit{q}=\left\{q_1,\dots,q_l,\dots,q_L\right\}$ requires to satisfy the conditions $\sum_{l=1}^{L} b_l\leq K$.

\begin{figure}[]
\centering
\includegraphics[width=1\columnwidth]{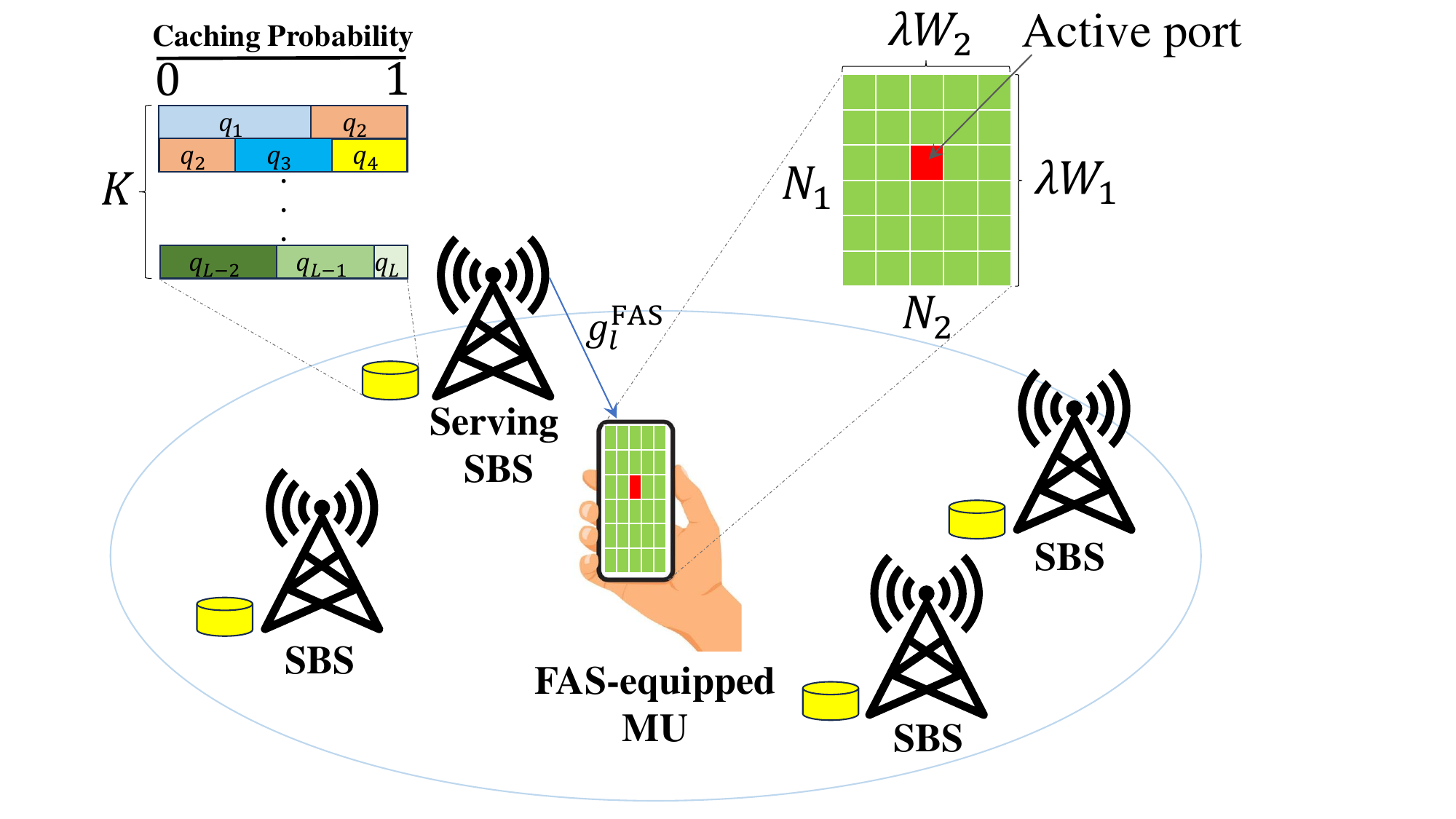}\vspace{-0.8cm}
\caption{A cache-enabled mm-wave HetNet with FAS-equipped MUs.}\vspace{-0.5cm}
\label{fig-model}
\end{figure}

\subsection{Signal Model and Downlink Transmission}
We consider that the locations of the SBSs are distributed according to an independent homogeneous Poisson point process (HPPP), $\Psi^\mathrm{S}$ with intensity $\mu_\mathrm{S}$. Specifically, $\Psi^\mathrm{S}_l$ denotes the point process corresponding to all the SBSs that cache the content $l$ in the mm-wave tier with intensity $q_l\mu_\mathrm{S}$. We assume that a  typical FAS-equipped MU is located at the origin of a Cartesian plane, i.e., $\mathrm{o}=\left(0,0\right)$. Each SBS is equipped with a traditional fixed-antenna, but each MU has a planar FAS. The intensity of MUs is typically much higher than the intensity of SBSs, and we assume that a FAS MU is only permitted to communicate with the SBS within a given time slot.\footnote{This assumption can be managed by employing multiple access techniques in dense small cell networks \cite{zhu2016wireless}.}

Additionally, we consider a grid structure for the respective FAS so that $N_i$ ports are uniformly distributed along a linear space of length $W_i\lambda$ for $i\in\{1,2\}$, i.e., $N=N_1\times N_2$ and $W=W_1\lambda\times W_2\lambda$, where $\lambda$ denotes the wavelength
associated with the carrier frequency. We also define $\mathcal{G}\left(n\right)=\left(n_1,n_2\right)$ and $n=\mathcal{G}^{-1}\left(n_1,n_2\right)$ as an appropriate mapping function to convert the two-dimensional (2D) indices to the one-dimensional (1D) index, in which $n\in\left\{1,\dots,N\right\}$ and $n_i\in\left\{1,\dots,N_i\right\}$. As a result, under such assumptions, the received signal-to-noise ratio (SNR) at a MU located at the origin that requests the content $l$ from the corresponding SBS $X_\mathrm{o}$ that includes this content is defined as
\begin{align}\label{eq:snr}
\gamma_l=\frac{P\left| h_l^{n^*}\right|^2 D\left(\vert X_l\vert\right)}{\sigma^2},
\end{align}
where $P$ denotes the transmit power and $D\left(\vert X_l\vert\right)=\beta\vert X_l\vert^{-\alpha}$ defines the pathloss with the distance $\vert X_l\vert$, with $\beta$ and $\alpha$ being the frequency dependent constant parameter and the pathloss exponent, respectively. Recognizing the fact that mm-wave communications tend to be noise-limited and interference is relatively minimal, $\sigma^2$ in (\ref{eq:snr}) denotes the combined power of noise and weak interference at the MU.\footnote{In dense networks, mm-wave communication operates in the noise-limited regime since the significant path loss diminishes interference and potentially enhances signal directivity \cite{singh2015tractable}. In fact, unlike the sub-$6$ GHz bands, which are typically interference-limited, mm-wave networks primarily face noise-limited, especially when the base station density is not exceedingly high. This phenomenon is mainly due to the narrow beam and blocking effects \cite{andrews2016modeling}.} Moreover, $n^*$ is the index of the best port, which is given by
\begin{align}
n^* = \arg\max_n\left\{\left| h_l^{n}\right|^2\right\},
\end{align}
where $h_l^n$ is the Rayleigh fading channel coefficient between the $n$-th port of a typical FAS-equipped MU and its serving SBS so that the channel gain $g_l^n=\left| h_l^{n}\right|^2$ follows exponential distribution, i.e., $g_l^n\sim\mathrm{Exp}\left(1\right)$. Therefore, the equivalent channel gain at the MU is given by
\begin{align}
g_l^\mathrm{FAS} = \max\left\{g_l^1,\dots,g_l^N\right\}.
\end{align}

In addition, regarding the fact that the FAS ports can freely switch and be arbitrarily close to each other, the channel coefficients are spatially correlated. Thus, considering a three-dimensional (3D) environment under rich scattering, the covariance between two arbitrary ports $n=\mathcal{G}^{-1}\left(n_1,n_2\right)$ and $\tilde{n}=\mathcal{G}^{-1}\left(\tilde{n}_1,\tilde{n}_2\right)$ is characterized as
\begin{align}
\hspace{-3mm}\varrho_{n,\tilde{n}}=j_0\left(2\pi\sqrt{\left(\frac{n_1-\tilde{n}_1}{N_1-1}W_1\right)^2+\left(\frac{n_2-\tilde{n}_2}{N_2-1}W_2\right)^2}\right),
\end{align}
where $\tilde{n}\in\left\{1,\dots,N\right\}$ and $\tilde{n}_i\in\left\{1,\dots,N_i\right\}$, and $j_0(\cdot)$ is the zeroth-order spherical Bessel function of the first kind.

Given that only the optimal port that maximizes the received SNR at the FAS-equipped MU is activated, the cumulative distribution function (CDF) and probability density function (PDF) of $g_l^\mathrm{FAS}$ are, respectively, given by \cite{ghadi2023gaussian}
\begin{align}\label{eq-cdf-fas}
F_{g^\mathrm{FAS}_l}\left(r\right)=\Phi_\vec{R}\left(\phi^{-1}\left(F_{g_l^1}\left(r\right)\right),\dots,\phi^{-1}\left(F_{g_l^N}\left(r\right)\right);\vartheta\right),
\end{align}
where $\Phi_\vec{R}$ represents the joint CDF of the multivariate normal distribution with zero mean vector and correlation matrix $\vec{R}$, $\phi^{-1}\left(F_{g_l^n}\left(r\right)\right)=\sqrt 2\mathrm{erf}^{-1}\left(2F_{g_l^n}\left(r\right)-1\right)$ represents the quantile function of the standard normal distribution, and $\vartheta$ is the dependence parameter, i.e., the Gaussian copula parameter \cite{ghadi2023gaussian}. Besides, since all channels undergo Rayleigh fading, $g_l^n$ for $n\in\left\{1,\dots,N\right\}$ has the following distributions:
\begin{align} \label{eq-dist}
F_{{g}_l^n}\left(r\right)=1-\exp\left(-r\right)~\mbox{and }f_{g_l^n}\left(r\right)=\exp(-r).
\end{align}

\section{Performance Analysis}
\subsection{SCDP Analysis}
SCDP is one of the key performance metrics in wireless cache-enabled cellular networks and defined as the probability that the content requested by a typical MU is both cached in the network and successfully transmitted to the MU. Therefore, the SCDP for the considered cache-enabled mm-wave FAS is mathematically defined as
\begin{align}
P_\mathrm{scd}=\sum_{l=1}^{L}p_l\underset{P^n_{\mathrm{s},l}}{\underbrace{\Pr\left(\gamma_l\ge\eta\right)}},\label{eq-scdp-def}
\end{align}
where $\eta$ denotes the SNR threshold and $P^n_{\mathrm{s},l}$ is the probability of successful transmission.

\begin{theorem}
The SCDP for a typical MU of the considered cache-enabled mm-wave HetNet is given by \eqref{eq-scdp} (see top of the page), in which $\kappa_a$ denotes the $a$-th root of the Laguerre polynomial $\mathcal{L}_{A}\left(\kappa_a\right)$ and $A$ represents the parameter to ensure a complexity-accuracy trade-off.
\begin{figure*}[]
\begin{align}
P_\mathrm{scd}=\sum_{l=1}^L\sum_{a=1}^{A}\frac{p_l\pi q_l\mu_\mathrm{S}\mathrm{e}^{\kappa_a}\kappa_a^2\mathrm{e}^{-\pi q_l\mu_\mathrm{S}\kappa_a^2}}{\left(A+1\right)^2\mathcal{L}^2_{A+1}\left(\kappa_a\right)}\bigg[1-\Phi_\vec{R}\bigg(\sqrt 2\mathrm{erf}^{-1}\left(1-2\mathrm{e}^{-\frac{\eta\sigma^2_n\kappa_a^\alpha}{P\beta}}\right),\dots,\sqrt 2\mathrm{erf}^{-1}\left(1-2\mathrm{e}^{-\frac{\eta\sigma^2_n\kappa_a^\alpha}{P\beta}}\right);\vartheta\bigg)\bigg]\label{eq-scdp}
\end{align}
\hrulefill
\end{figure*}
\end{theorem}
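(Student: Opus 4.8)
The plan is to begin from the per-content success probability $P^n_{\mathrm{s},l}=\Pr(\gamma_l\ge\eta)$ appearing in \eqref{eq-scdp-def}, substitute the SNR expression from \eqref{eq:snr}, and reduce the event $\{\gamma_l\ge\eta\}$ to a threshold condition on the equivalent FAS channel gain. Since $\gamma_l=P\,g_l^\mathrm{FAS}\beta|X_l|^{-\alpha}/\sigma^2$, the condition $\gamma_l\ge\eta$ is equivalent to $g_l^\mathrm{FAS}\ge\eta\sigma^2|X_l|^\alpha/(P\beta)$. The two sources of randomness are the FAS gain $g_l^\mathrm{FAS}$ and the serving-link distance $|X_l|$, which are independent, so I would condition on the distance and then average over it.

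First I would write $P^n_{\mathrm{s},l}=\int_0^\infty \Pr\!\big(g_l^\mathrm{FAS}\ge \eta\sigma^2 r^\alpha/(P\beta)\big)\,f_{|X_l|}(r)\,dr$. For the distance term, since the SBSs caching content $l$ form the HPPP $\Psi^\mathrm{S}_l$ of intensity $q_l\mu_\mathrm{S}$ and the MU connects to its nearest such SBS, the contact-distance density is the standard nearest-neighbour law $f_{|X_l|}(r)=2\pi q_l\mu_\mathrm{S}\,r\,e^{-\pi q_l\mu_\mathrm{S} r^2}$, obtained by differentiating the void probability $\Pr(|X_l|>r)=e^{-\pi q_l\mu_\mathrm{S} r^2}$. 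For the gain term, I would invoke the CDF in \eqref{eq-cdf-fas} together with the Rayleigh marginals in \eqref{eq-dist}: the complementary CDF is $1-\Phi_\vec{R}(\cdots;\vartheta)$, where each argument is the standard-normal quantile $\phi^{-1}(F_{g_l^n}(x))=\sqrt2\,\mathrm{erf}^{-1}(2(1-e^{-x})-1)=\sqrt2\,\mathrm{erf}^{-1}(1-2e^{-x})$ evaluated at $x=\eta\sigma^2 r^\alpha/(P\beta)$. Combining the two pieces yields an exact single integral over $r$ whose integrand is the product of the nearest-neighbour density and the bracketed term $[\,1-\Phi_\vec{R}(\cdots)\,]$.

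The remaining obstacle is that this integral admits no closed form, because the multivariate normal CDF $\Phi_\vec{R}$ has no elementary antiderivative in $r$; this is precisely why the compact form \eqref{eq-scdp} is expressed through Gauss--Laguerre quadrature. The key manoeuvre is to cast $\int_0^\infty h(r)\,dr$ into the canonical Laguerre form $\int_0^\infty e^{-r}\,[e^{r}h(r)]\,dr$ by inserting the compensating factor $e^{r}e^{-r}$, so that the rule $\int_0^\infty e^{-r}f(r)\,dr\approx\sum_{a=1}^{A}w_a f(\kappa_a)$ applies, with $\kappa_a$ the roots of $\mathcal{L}_A$ and weights $w_a=\kappa_a/[(A+1)^2\mathcal{L}^2_{A+1}(\kappa_a)]$. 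Evaluating at each node $\kappa_a$ replaces $r$ by $\kappa_a$ throughout: the factor $r$ in the density together with the $\kappa_a$ carried by $w_a$ produce the $\kappa_a^2$ term, the $e^{r}$ compensator yields $e^{\kappa_a}$, the Gaussian tail of the density gives $e^{-\pi q_l\mu_\mathrm{S}\kappa_a^2}$, and the threshold argument becomes $\sqrt2\,\mathrm{erf}^{-1}(1-2e^{-\eta\sigma^2\kappa_a^\alpha/(P\beta)})$, so the $\Phi_\vec{R}(\cdots)$ bracket carries over verbatim.

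Finally I would substitute the quadrature approximation of $P^n_{\mathrm{s},l}$ back into the outer sum $\sum_{l=1}^L p_l P^n_{\mathrm{s},l}$ from \eqref{eq-scdp-def}, collecting the prefactor constants, the weight denominator $(A+1)^2\mathcal{L}^2_{A+1}(\kappa_a)$, and the node-dependent exponentials to arrive at \eqref{eq-scdp}. I expect the genuinely delicate step to be the reduction to the single-integral form -- in particular, confirming the independence of $g_l^\mathrm{FAS}$ and $|X_l|$ so that conditioning on distance is legitimate, and faithfully transcribing the complementary-CDF arguments from \eqref{eq-cdf-fas}. The quadrature step is then a mechanical application of a standard rule, with $A$ trading computational complexity against accuracy.
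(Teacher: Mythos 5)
Your proposal is correct and follows the paper's own proof essentially step for step: conditioning on the nearest-cached-SBS distance with the void-probability density $f_{|X_l|}(x)=2\pi q_l\mu_\mathrm{S}\,x\,\mathrm{e}^{-\pi q_l\mu_\mathrm{S}x^2}$, inserting the complementary Gaussian-copula CDF from \eqref{eq-cdf-fas} with arguments $\sqrt{2}\,\mathrm{erf}^{-1}\big(1-2\mathrm{e}^{-\eta\sigma^2x^\alpha/(P\beta)}\big)$, converting the intractable integral via the $\mathrm{e}^{s}\mathrm{e}^{-s}$ compensation into Gauss--Laguerre form, and finally averaging over $l$ with the Zipf weights $p_l$, exactly as in \eqref{eq-proof-1}--\eqref{eq-proof2}. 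One bookkeeping remark: with the standard weights $w_a=\kappa_a/[(A+1)^2\mathcal{L}^2_{A+1}(\kappa_a)]$ that you quote, the summand coefficient evaluates to $2\pi q_l\mu_\mathrm{S}\kappa_a^2$, whereas \eqref{eq-scdp} displays $\pi q_l\mu_\mathrm{S}\kappa_a^2$ --- the paper's stated quadrature rule carries an extra factor of $\tfrac{1}{2}$ in its weight relative to the standard rule, so this residual factor of $2$ is an internal inconsistency on the paper's side rather than a flaw in your method, but you should flag it explicitly rather than assert that the constants collect to give \eqref{eq-scdp} verbatim.
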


\begin{proof}
From \eqref{eq-scdp-def}, $P^n_{\mathrm{s},l}$ is conditioned on the random variable $\left|X_l\right|$, which represents the distance between the typical FAS-equipped MU and its nearest serving SBS that has cached content $l$. Hence, $P^n_{\mathrm{s},l}$ can be rewritten as
\begin{align}
\hspace{-2mm}P^n_{\mathrm{s},l} &=\mathbb{E}_{\left|X_l\right|}\left\{\Pr\left(\gamma_l\ge\eta\bigg\vert \left|X_l\right|=x\right)\right\}\\
&=\int_0^\infty\Pr\left(\gamma_l\ge\eta\right)f_{\left|X_l\right|}\left(x\right)\mathrm{d}x\\
&=\int_0^\infty\Pr\left(g_l^\mathrm{FAS}  \ge\frac{\eta\sigma^2_nx^\alpha}{P\beta}\right)f_{\left|X_l\right|}\left(x\right)\mathrm{d}x, \label{eq-proof-1}
\end{align}
where $\Pr\left(g_l^\mathrm{FAS}  \ge\frac{\eta\sigma^2_nx^\alpha}{P\beta}\right)=1-F_{g^\mathrm{FAS}_l}\left(\frac{\eta\sigma^2_nx^\alpha}{P\beta}\right)$ is obtained from \eqref{eq-cdf-fas} and  $f_{\left|X_l\right|}\left(x\right)$ is the PDF of of the distance $\left|X_l\right|$, which is given by
\begin{align}
f_{\left|X_l\right|}\left(x\right)=2\pi q_l\mu_\mathrm{S}x\exp\left(-\pi q_l\mu_\mathrm{S}x^2\right).\label{eq-pdf-x}
\end{align}
Then by considering \eqref{eq-dist}, and  substituting \eqref{eq-cdf-fas} and \eqref{eq-pdf-x} into \eqref{eq-proof-1}, $P^n_{\mathrm{s},l}$ can be rewritten as
\begin{align}\nonumber
&P^n_{\mathrm{s},l}=2\pi q_l\mu_\mathrm{S}\\ \nonumber
&\times\int_0^\infty \hspace{0mm}x\mathrm{e}^{-\pi q_l\mu_\mathrm{S}x^2}\bigg[1-\Phi_\vec{R}\bigg(\sqrt 2\mathrm{erf}^{-1}\left(1-2\mathrm{e}^{-\frac{\eta\sigma^2_nx^\alpha}{P\beta}}\right),\\
&\quad\quad\dots,\sqrt 2\mathrm{erf}^{-1}\left(1-2\mathrm{e}^{-\frac{\eta\sigma^2_nx^\alpha}{P\beta}}\right);\vartheta\bigg)\bigg]\mathrm{d}x. \label{eq-proof2} 
\end{align}
Nonetheless, the integral in \eqref{eq-proof2} is mathematically intractable. As such, we utilize the Gauss-Laguerre quadrature approach, which is defined as 
\begin{align}
\int_0^\infty \Delta(s)\mathrm{d}s\approx\sum_{a=1}^{A}\frac{\mathrm{e}^{\kappa_a}\kappa_a}{2\left(A+1\right)^2\mathcal{L}^2_{A+1}\left(\kappa_a\right)}\Delta\left(\kappa_a\right),
\end{align}
in which $\kappa_a$ is the $a$-th root of Laguerre polynomial $\mathcal{L}_{A}\left(\kappa_a\right)$. By doing so, \eqref{eq-proof2} is solved after some simplifications. Then, by plugging the results into \eqref{eq-scdp-def}, the proof is completed. 
\end{proof}

\subsection{CDD Analysis}
QoS in wireless networks is intimately linked to the delay experienced by the MUs. In cache-enabled HetNets, the SBSs should deliver the requested content to the target MUs, and delay occurs primarily due to channel fading. Specifically, we consider an automatic repeat request (ARQ)-based transmission protocol such that when a MU sends its request to the nearest SBS, the requested content is repeatedly transmitted by the corresponding SBS by at most $M$ ARQ rounds (i.e., the maximum number of retransmission attempts) until it is successfully delivered. If the content is successfully delivered to the MU, then the respective SBS receives a one-bit acknowledgement message from the MU with negligible delay and error; otherwise, the SBS receives a one-bit negative acknowledgement message with the same condition. Furthermore, we assume that each ARQ round takes $T_0$ amount of time, and the outage event occurs if the MU does not successfully receive the content after $M$ ARQ rounds. The CDD for the considered system is given in the following theorem.

\begin{theorem} 
The CDD for the cache-enabled mm-wave HetNet with FAS-equipped MUs is given by \eqref{eq-cdt} (see top of next page).
\begin{figure*}[]
\begin{align}
D_\mathrm{cd}=T_0\frac{1-\left(1-\sum_{a=1}^{A}\frac{\pi q_l\mu_\mathrm{S}\mathrm{e}^{\kappa_a}\kappa_a^2\mathrm{e}^{-\pi q_l\mu_\mathrm{S}\kappa_a^2}}{\left(A+1\right)^2\mathcal{L}^2_{A+1}\left(\kappa_a\right)}\bigg[1-\Phi_\vec{R}\bigg(\sqrt 2\mathrm{erf}^{-1}\left(1-2\mathrm{e}^{-\frac{\eta\sigma^2_n\kappa_a^\alpha}{P\beta}}\right),\dots,\sqrt 2\mathrm{erf}^{-1}\left(1-2\mathrm{e}^{-\frac{\eta\sigma^2_n\kappa_a^\alpha}{P\beta}}\right);\vartheta\bigg)\bigg]\right)^M}{\sum_{a=1}^{A}\frac{\pi q_l\mu_\mathrm{S}\mathrm{e}^{\kappa_a}\kappa_a^2\mathrm{e}^{-\pi q_l\mu_\mathrm{S}\kappa_a^2}}{\left(A+1\right)^2\mathcal{L}^2_{A+1}\left(\kappa_a\right)}\bigg[1-\Phi_\vec{R}\bigg(\sqrt 2\mathrm{erf}^{-1}\left(1-2\mathrm{e}^{-\frac{\eta\sigma^2_n\kappa_a^\alpha}{P\beta}}\right),\dots,\sqrt 2\mathrm{erf}^{-1}\left(1-2\mathrm{e}^{-\frac{\eta\sigma^2_n\kappa_a^\alpha}{P\beta}}\right);\vartheta\bigg)\bigg]}\label{eq-cdt}
\end{align}
\hrulefill
\end{figure*}
\end{theorem}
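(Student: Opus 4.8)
The plan is to reduce the delay computation to the per-content success probability already obtained in Theorem 1 and then to evaluate the expected number of ARQ rounds for a truncated geometric model. First I would observe that, under the ARQ protocol, the fading realization is refreshed at each retransmission, so that the events ``the content is successfully delivered in a given round'' are independent across rounds and each occurs with probability $P^n_{\mathrm{s},l}=\Pr(\gamma_l\ge\eta)$, i.e., exactly the single-shot success probability whose Gauss--Laguerre form appears inside the bracket of \eqref{eq-scdp}. This identification is the conceptual core of the argument: it lets me treat one ARQ round as a Bernoulli trial with success parameter $P^n_{\mathrm{s},l}$, and it is the reason the intractable distance average never has to be revisited here.

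Next I would model the number of ARQ rounds $N$ consumed before the transmission either succeeds or is abandoned after the $M$-th attempt. Since each round takes $T_0$ seconds, the delay is $D_\mathrm{cd}=T_0\,\mathbb{E}[N]$, so the task becomes computing $\mathbb{E}[N]$ for a geometric distribution truncated at $M$. I would invoke the tail-sum identity $\mathbb{E}[N]=\sum_{k=1}^{M}\Pr(N\ge k)$. Because needing at least $k$ rounds is precisely the event that the first $k-1$ rounds all failed, we have $\Pr(N\ge k)=(1-P^n_{\mathrm{s},l})^{k-1}$, and summing the finite geometric series yields $\mathbb{E}[N]=\sum_{j=0}^{M-1}(1-P^n_{\mathrm{s},l})^{j}=\frac{1-(1-P^n_{\mathrm{s},l})^{M}}{P^n_{\mathrm{s},l}}$. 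Substituting the closed-form (Gauss--Laguerre) expression for $P^n_{\mathrm{s},l}$ from Theorem 1 into $D_\mathrm{cd}=T_0\,\frac{1-(1-P^n_{\mathrm{s},l})^{M}}{P^n_{\mathrm{s},l}}$ reproduces \eqref{eq-cdt} directly.

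I do not expect a genuine analytical obstacle, since Theorem 1 has already dispatched the only intractable part, namely the distance average of the correlated-FAS outage term through the quadrature rule. The point that most deserves care is justifying the Bernoulli/geometric reduction: I must state explicitly the assumption of independent (i.i.d.) fading across ARQ rounds together with per-round stationarity of the link geometry, because without per-round independence the rounds would be correlated and the product form $(1-P^n_{\mathrm{s},l})^{k-1}$ for the failure tail would break down. A secondary bookkeeping check concerns the truncation at $M$: I would verify that the stopping rule (halt at the first success or at round $M$) forces $N\le M$, so that the tail sum runs from $k=1$ to $M$ rather than to infinity, which is exactly what produces the $1-(1-P^n_{\mathrm{s},l})^{M}$ factor in the numerator reflecting the residual outage after $M$ attempts.
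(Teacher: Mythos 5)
Your proposal is correct and follows essentially the same route as the paper: the paper's proof writes the delay directly as the series $T_0+T_0(1-P^n_{\mathrm{s},l})+\dots+T_0(1-P^n_{\mathrm{s},l})^{M-1}$, which is precisely your tail-sum $T_0\sum_{k=1}^{M}\Pr(N\ge k)$ for the truncated geometric model, sums it to $T_0\frac{1-(1-P^n_{\mathrm{s},l})^{M}}{P^n_{\mathrm{s},l}}$, and then substitutes the Gauss--Laguerre expression for $P^n_{\mathrm{s},l}$ from Theorem 1, exactly as you do. Your explicit statement of the i.i.d.-fading-across-rounds assumption is a point the paper leaves implicit (it simply cites the approach of an earlier backhauling work), so your writeup is, if anything, slightly more careful on that hypothesis.
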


\begin{proof} 
Following the same approach in \cite{chen2015backhauling}, the CDD takes at least $T_0$ amount of time with probability of $1$. Hence, the first failure happens with probability $1-P_{\mathrm{s},l}^n$ and takes $T_0$ additional time. Given the first failure, the second failure appears with probability $1-P_{\mathrm{s},l}^n$ and takes $T_0$ additional time. By proceeding in this way, the CDD can be mathematically written as
\begin{align}
D_\mathrm{c}&=T_0+T_0\left(1-P_{\mathrm{s},l}^n\right)+T_0\left(1-P_{\mathrm{s},l}^n\right)^2+\notag\\
&\quad\dots+T_0\left(1-P_{\mathrm{s},l}^n\right)^{M-1}=T_0\frac{1-\left(1-P_{\mathrm{s},l}^n\right)^M}{P_{\mathrm{s},l}^n}.\label{ref-proof2}
\end{align}
Then after computing the integral in \eqref{eq-proof2} with the help of the Gauss-Laguerre quadrature approach, and then plugging the obtained $P_{\mathrm{s},l}^n$ into \eqref{ref-proof2}, the proof is accomplished.
\end{proof}

\begin{corollary}
In the high ARQ rounds regime, i.e., $M\rightarrow\infty$, the CDD can be expressed as \eqref{eq-cdt-m} (see top of next page). 
\begin{figure*}[]
\begin{align}
\hspace{-0.4cm}D_\mathrm{cd}^\infty=T_0\left({\sum_{a=1}^{A}\frac{\pi q_l\mu_\mathrm{S}\mathrm{e}^{\kappa_a}\kappa_a^2\mathrm{e}^{-\pi q_l\mu_\mathrm{S}\kappa_a^2}}{\left(A+1\right)^2\mathcal{L}^2_{A+1}\left(\kappa_a\right)}\bigg[1-\Phi_\vec{R}\bigg(\sqrt 2\mathrm{erf}^{-1}\left(1-2\mathrm{e}^{-\frac{\eta\sigma^2_n\kappa_a^\alpha}{P\beta}}\right),\dots,\sqrt 2\mathrm{erf}^{-1}\left(1-2\mathrm{e}^{-\frac{\eta\sigma^2_n\kappa_a^\alpha}{P\beta}}\right);\vartheta\bigg)\bigg]}\right)^{-1}\label{eq-cdt-m}
\end{align}
\hrulefill\vspace{-0.45cm}
\end{figure*}
\end{corollary}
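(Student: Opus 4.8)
The plan is to obtain \eqref{eq-cdt-m} as a direct $M\to\infty$ limit of the finite-round CDD established in Theorem~2, so that no new integral need be evaluated. I would begin from the geometric-series form \eqref{ref-proof2}, namely $D_\mathrm{c}=T_0\bigl(1-(1-P_{\mathrm{s},l}^n)^M\bigr)/P_{\mathrm{s},l}^n$, in which $P_{\mathrm{s},l}^n$ is the per-round successful-transmission probability whose closed form is the Gauss--Laguerre evaluation of the integral \eqref{eq-proof2}. Everything after this reduces to controlling a single power term.

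The key step is to analyze the factor $(1-P_{\mathrm{s},l}^n)^M$. Since $P_{\mathrm{s},l}^n$ is a probability it lies in $[0,1]$, and in any non-degenerate scenario the successful-transmission probability is strictly positive, so $0\le 1-P_{\mathrm{s},l}^n<1$. Raising a base strictly below one to the power $M$ and letting $M\to\infty$ forces $(1-P_{\mathrm{s},l}^n)^M\to 0$, whence the numerator $1-(1-P_{\mathrm{s},l}^n)^M\to 1$. Because the denominator $P_{\mathrm{s},l}^n$ does not depend on $M$, the quotient converges and I obtain $D_\mathrm{cd}^\infty = T_0/P_{\mathrm{s},l}^n$.

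Finally I would substitute the explicit Gauss--Laguerre expression for $P_{\mathrm{s},l}^n$ (the inner summation appearing in \eqref{eq-scdp} before the $p_l$-weighting, equivalently the quantity computed from \eqref{eq-proof2}) into $T_0/P_{\mathrm{s},l}^n$ and write the reciprocal as a $(\cdot)^{-1}$ factor, which reproduces \eqref{eq-cdt-m} verbatim. I do not expect any serious obstacle: the argument is a routine limit of a truncated geometric sum. The only point meriting a line of justification is the strict inequality $P_{\mathrm{s},l}^n>0$, which simultaneously guarantees that $(1-P_{\mathrm{s},l}^n)^M\to 0$ and that the reciprocal is well defined; this holds whenever the channel affords a non-zero chance of meeting the SNR threshold $\eta$, i.e., precisely the regime of interest.
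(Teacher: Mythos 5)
Your proposal is correct and follows exactly the route the paper takes: applying $M\rightarrow\infty$ to the geometric-series expression \eqref{ref-proof2} so that $\left(1-P_{\mathrm{s},l}^n\right)^M\rightarrow 0$, yielding $D_\mathrm{cd}^\infty=T_0/P_{\mathrm{s},l}^n$, and then substituting the Gauss--Laguerre form of $P_{\mathrm{s},l}^n$. Your explicit remark that $P_{\mathrm{s},l}^n>0$ is needed for both the limit and the reciprocal is a small but welcome refinement the paper leaves implicit.
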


\begin{proof}
By applying $M\rightarrow\infty$ to \eqref{ref-proof2}, we have
\begin{align}
D_\mathrm{cd}^\infty=\lim_{M\rightarrow\infty} T_0\frac{1-\left(1-P_{\mathrm{s},l}^n\right)^M}{P_{\mathrm{s},l}^n}=\frac{T_0}{P_{\mathrm{s},l}^n},
\end{align}
where using the derived $P_{\mathrm{s},l}^n$, the proof is done.
\end{proof}

\begin{figure*}[!htb]
\minipage{0.32\textwidth}
\includegraphics[width=\linewidth]{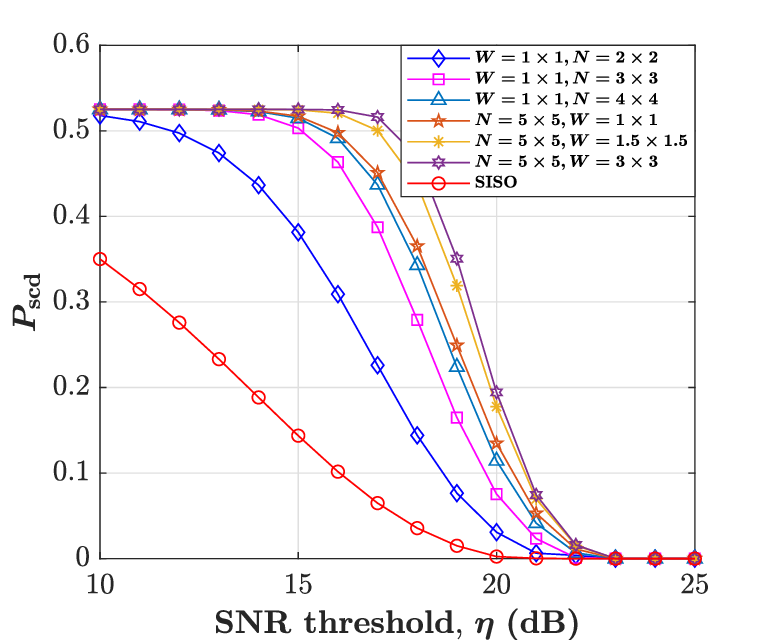}\vspace{-0.7cm}
\caption{SCDP with $q_l=1$.}\label{fig-scd-eta}
\endminipage\hfill
\minipage{0.32\textwidth}
\includegraphics[width=\linewidth]{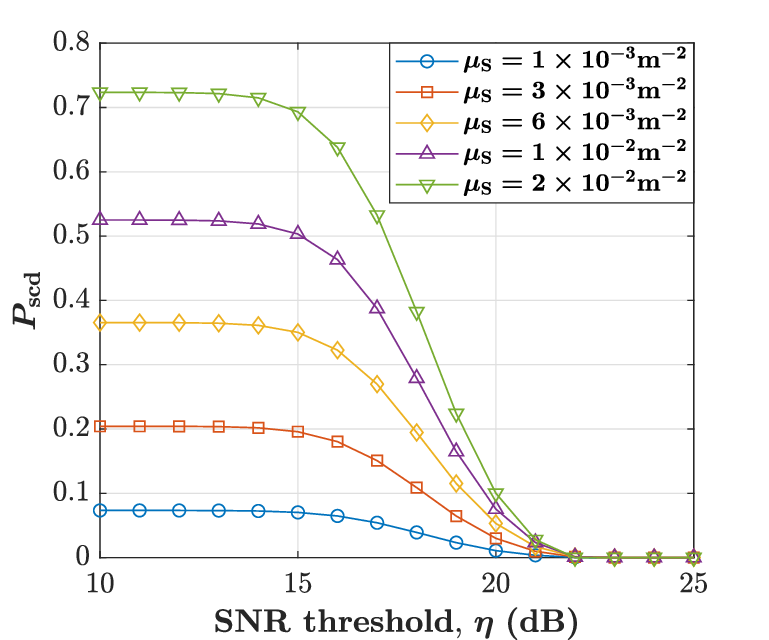}\vspace{-0.7cm}
\caption{SCDP with $q_l=1$, $N=9$, $W=1\lambda\times 1\lambda$.}\label{fig-scd-eta-mu}
\endminipage\hfill
\minipage{0.32\textwidth}%
\includegraphics[width=\linewidth]{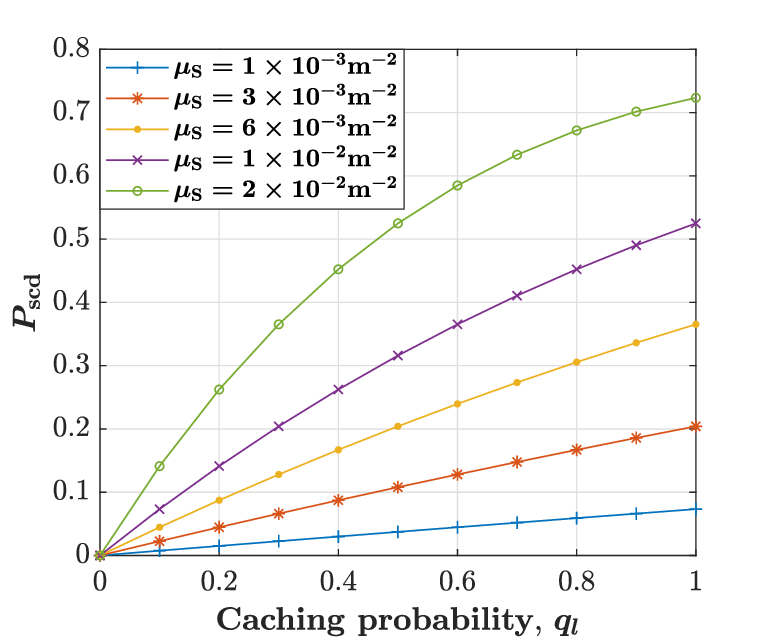}\vspace{-0.7cm}
\caption{SCDP with $N=9$, $W=1\lambda\times 1\lambda$.}\label{fig-scd-q-mu}
\endminipage \vspace{-0.2cm}
\end{figure*}
\begin{figure}[!t]
\centering
\includegraphics[width=0.7\columnwidth]{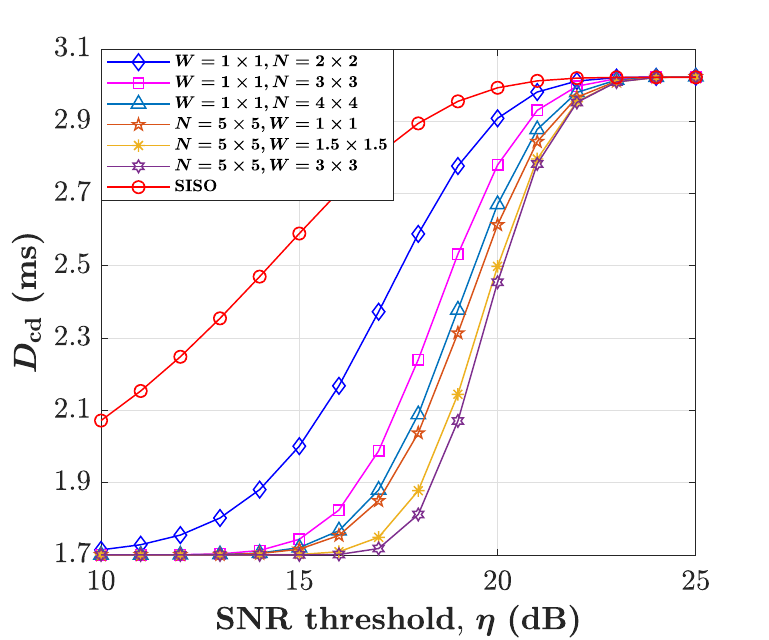}\vspace{-0.3cm}
\caption{CDD with $q_l=1$, $M=3$.}\vspace{-0.3cm}
\label{fig-cd-eta}
\end{figure}
\begin{figure}[!t]
\centering
\includegraphics[width=0.7\columnwidth]{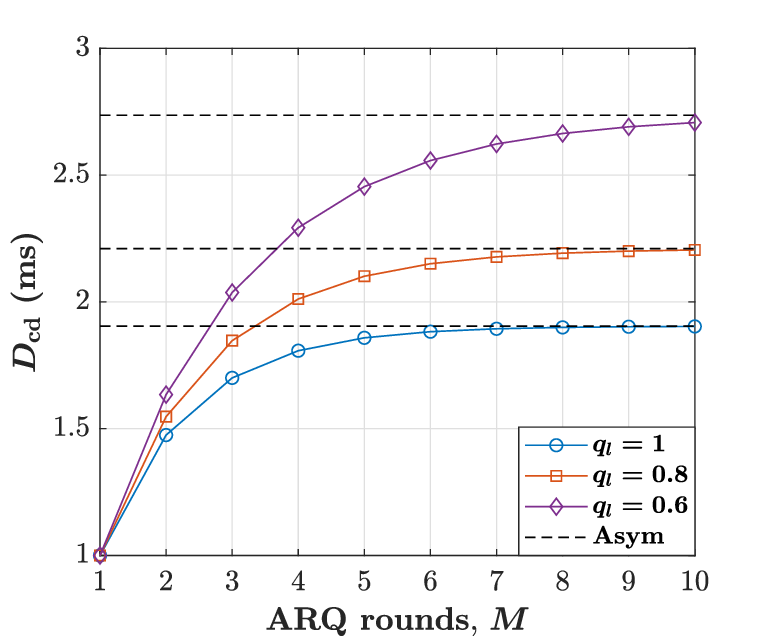}\vspace{-0.3cm}
\caption{CDD with $N=9$, $W=1\lambda\times 1\lambda$.}
\label{fig-cd-m}
\end{figure}

\section{Numerical Results}\label{sec-num}
In this section, we evaluate the considered system performance in terms of the SCDP and CDD. It is worth noting that the joint multivariate normal distribution in theoretical derivation can be estimated numerically by the mathematical package of programming languages, and the Gaussian copula can also be implemented through Algorithm 1 in \cite{ghadi2023gaussian}. Furthermore, we set the system parameters as $L=100$, $K=10$, $\mu_\mathrm{S}=10^{-2}~{\rm m}^{-2}$, $P=-30~{\rm dBm}$, $\sigma^2=-60~{\rm dBm}$, $\alpha=3$, $\beta=1$, $\eta=0~{\rm dB}$, $\zeta=1$, and $T_0=1~{\rm ms}$.

Fig.~\ref{fig-scd-eta} studies the behavior of the SCDP in terms of the SNR threshold $\eta$ for different values of fluid antenna size $W$ and various number of fluid antenna ports $N$, which corresponds to the case with caching placement probability $q_l=1$. First, we can observe that the SCDP continuously decreases as $\eta$ increases since increasing the SNR threshold means that only MUs experiencing higher SNR levels will be able to access the content and the MUs with lower SNR levels will be excluded, which thus restricts the average coverage area of the content delivery system. Next, compared with the traditional single fixed-antenna system, it can be seen that deploying FAS with only one active port at the MUs can significantly increase the chance of having reliable transmission to deliver the requested content $l$ by the MUs. Further, we can see that increasing $W$ for a constant $N$ remarkably improves the SCDP performance since it can increase the spatial separation between the fluid antenna ports, and hence the spatial correlation decreases and a higher SCDP is obtained. Besides, although increasing $N$ for a fixed $W$ increases the spatial correlation among the fluid antenna ports, it can enhance the SCDP performance since it also has the potential to boost channel capacity, diversity gain, and spatial multiplexing simultaneously.

In Fig.~\ref{fig-scd-eta-mu}, we illustrate the SCDP performance versus $\eta$ for selected values of mm-wave SBSs intensity $\mu_\mathrm{S}$. Under FAS deployment at the MUs, we can observe that increasing $\mu_\mathrm{S}$ can achieve higher values of SCDP. The main reason behind this trend is that increasing the number of mm-wave SBSs leads to a reduction in the distance between the FAS-equipped MUs and their serving SBSs; consequently, the overall SNR at the MUs is improved. Moreover, the SCDP performance against the content placement probability $q_l$ is indicated in Fig.~\ref{fig-scd-q-mu}. It can be clearly seen that the SCDP for an arbitrary content $l$ requested by the FAS-equipped MUs monotonically increases, meaning that it is a concave function of the caching placement probability. This tendency is reasonable because as $q_l$ grows, the chance of reaching the content by the MUs increases.

In Fig.~\ref{fig-cd-eta}, the CDD results in terms of the SNR threshold for different $N$ and $W$ are shown. We can observe that under a given value of the ARQ rounds $M$ and the caching placement probability $q_l$, the MUs that are equipped with the FAS receive their requested content $l$ with a lower delay compared with the fixed-antenna MUs. For instance, under a fixed $\eta=15~{\rm dB}$, the CDD for a typical FAS-equipped MU with $N=9$ and $W=1\lambda\times 1\lambda$ is around $1.72~{\rm ms}$, while the delay experienced by a typical fixed-antenna MU is almost $2.6~{\rm ms}$. Hence, under such condition, it can be seen that an arbitrary content $l$ is delivered $0.88~{\rm ms}$ faster to the MUs if the FAS is deployed. Given the importance of ARQ rounds $M$ on the QoS, Fig.~\ref{fig-cd-m} illustrates the impact of $M$ on the CDD performance. We can observe that by increasing $M$, the delay experienced by the MUs increases, which is obvious since increasing the number of attempts by the MUs to successfully receive the requested content requires additional time. We can also see that the CDD performance weakens as the caching probability $q_l$ decreases since the chance of having the requested content $l$ in the nearest mm-wave SBS to the typical MU is reduced.

\section{Conclusion}\label{sec-con}
In this letter, we analyzed the performance of cache-enabled HetNets, where the mm-wave SBSs are distributed according to an independent HPPP. We assumed that the most popular contents are cached in the SBSs to serve arbitrary content requested by the FAS-equipped MUs. In this regard, we derived the compact analytical expressions for the SCDP and CDD in terms of the joint multivariate normal distribution, using the Gauss-Laguerre quadrature technique. Eventually, numerical results showed that considering the FAS with only one active port in the MUs can improve the overall performance of cache-enabled HetNets compared with the fixed-antenna MUs.



\end{document}